\newtheorem{theorem}{Theorem}[section]
\newtheorem{lemma}[theorem]{Lemma}
\newtheorem{definition}[theorem]{Definition}
\newtheorem{conjecture}[theorem]{Conjecture}
\newtheorem{open}[theorem]{Open Problem}
\newcommand{\ignore}[1]{}
\newcommand{\enote}[1]{} \newcommand{\knote}[1]{}
\newcommand{\rnote}[1]{}
\newcommand{\CondE}[2]{{\mathbb{E}}\left[{#1}\middle\vert{#2}\right]}
 \newcommand{\R}{\mathbb R}
\newcommand{\half}{{\textstyle \frac12}}
\renewcommand{\phi}{\varphi}
\begin{document}
\title{Efficient Bayesian Learning in Social Networks with Gaussian
  Estimators}
\author[1]{Elchanan Mossel}
\author[2]{Noah Olsman}
\author[3]{Omer Tamuz}

\affil[1]{University of Pennsylvania and U.C. Berkeley \authorcr Email: {\tt mossel@stat.berkeley.edu}\vspace{1.5ex}}
\affil[2]{California Institute of Technology \authorcr Email: {\tt nolsman@caltech.edu}\vspace{1.5ex}} 
\affil[3]{California Institute of Technology \authorcr Email: {\tt tamuz@caltech.edu} } 

\date{\today}
\maketitle
\begin{abstract}
  We consider a group of Bayesian agents who try to estimate a state
  of the world $\theta$ through interaction on a social network. Each agent $v$
  initially receives a private measurement of $\theta$: a number $S_v$
  picked from a Gaussian distribution with mean $\theta$ and standard
  deviation one.  Then, in each discrete time iteration, each reveals
  its estimate of $\theta$ to its neighbors, and, observing its neighbors'
  actions, updates its belief using Bayes' Law.

  This process
  aggregates information efficiently, in the sense that all the agents converge to the
  belief that they would have, had they access to all the private
  measurements. We show that  this process is computationally efficient, so
  that each agent's calculation can be easily carried out. We also
  show that on any graph the process converges after at most $2N \cdot D$ steps,
  where $N$ is the number of agents and $D$ is the diameter of the
  network. Finally, we show that on trees and on distance transitive-graphs the process converges after $D$ steps, and that it preserves privacy, so that agents learn very little about the private signal of most other agents, despite the efficient aggregation of information.
  Our results extend those in an unpublished manuscript of 
the first and last authors.
\end{abstract}

\section{Introduction}
We study a model of social learning, in which a group of
Bayesian agents learn a state of the world through repeated
interaction with their social network neighbors\footnote{This work is an extension of an unpublished manuscript by Mossel and Tamuz~\cite{mossel2010efficient}}.

Similar models which have been studied in the past can be roughly
divided into two categories: {\em rational} models and {\em rule of
  thumb} models. In rational models (e.g.\ Gale and
Kariv~\cite{GaleKariv:03}, Rosenberg, Solan and
Vieille~\cite{rosenberg2009informational}, Mossel, Sly and Tamuz~\cite{mossel2015strategic}, Arieli and Mueller-Frank~\cite{arieli2014inferring}), agents choose actions that
are optimal under some criterion; usually the maximization of some
expected utility.  In rule of thumb models (e.g.\
DeGroot~\cite{DeGroot:74}, Bala and Goyal~\cite{BalaGoyal:96}, Golub
and Jackson~\cite{golub2010naive}) they act by some fixed heuristic.

Rational models are more natural and conform to the economic paradigm
of rational agents, making them amenable to game theoretical
analysis. However, the calculations that are required of the agents
there are usually complicated and perhaps computationally hard. In
this paper we present a rational model for which, as we show, the
agents' actions can be calculated efficiently. An essentially
identical model was studied by DeMarzo, Vayanos and
Zweibel~\cite{DeMarzo:03}, who, however, did not consider
computational questions.

Beyond computation efficiency we study the speed of convergence of the learning process. Denoting the number of agents by $N$ and the diameter of the social network graph by $D$, we show that on any graph the process converges after at most $2N \cdot D$ steps. On trees and on distance-transitive graphs (e.g., the hypercube) we show that the process converges after $D$ steps; note that $D$ is a lower bound on convergence time in any graph.

DeMarzo, Vayanos and Zweibel~\cite{DeMarzo:03} show that information is optimally aggregated by this process: all agents eventually converge to the same estimate they would have if they shared all the private signals. We develop a notion of privacy, and show that despite the fact that information is aggregated, on some graphs a high degree of privacy is preserved, in the sense that most pairs of agents know very little about each other's private signals.

\subsection{Model}

We consider a finite set of Bayesian agents $V$ and denote
$N=|V|$. The agents are connected by an undirected {\em social network graph} $G=(V,E)$. We assume that the graph is connected.

The agents are interested in estimate a {\em state of the world} $\theta
\in \R$. Initially, each agent $v \in V$ receives a {\em private
  signal} $S_v$, drawn from the normal distribution with mean $\theta$ and
standard deviation $\sigma$. We assume that the different $S_v$ are independent.

The agents initially have some prior belief regarding $\theta$, and update
it to a posterior belief, according to Bayes' Law, with each
additional piece of information they encounter. Both prior and
posterior beliefs are distributions on the possible values of $\theta$.  We
assume that all agents share a common prior, the ``improper'' uniform
measure on $\R$. An equivalent model would let each agent $v$ have a
prior equal to the Gaussian distribution with mean $\theta$ and variance
one.

At each iteration $t$, each agent $v$ reveals to its neighbors the
expectation of its current belief $X_v(t)$, and learns theirs. It then
updates its belief, based on this new information:
\begin{align*}
  X_v(t) = \CondE{\theta}{S_v,\{X_w(t')\,:\,(v,w) \in E,t'<t\}}.
\end{align*}
Note that to perform this calculation each agent has to know the
structure of the graph.

This model is similar to the one presented in
\cite{MosselTamuzAllerton:09}.  The agents in that model, however,
were not Bayesian and had no memory of their observations in past
iterations.

We note that this model can be trivially generalized to a much wider
class of private signal distributions, provided that we assume that
the agents aim to minimize square error. In particular, in that case
all of our results hold for any private signal distribution in which
the covariances of the private signals (of each pair of agents) are
finite.

\section{Results}
We prove the following results:

\textbf{Efficient Computation.}  Each agent's calculation of $X_v(t)$ is
computationally efficient: it can be achieved using simple linear
algebra operations, involving matrices whose size is the size of the
network.
    
{\bf Efficient and Rapid Learning.} DeMarzo, Vayanos and
Zweibel~\cite{DeMarzo:03} show that the agents' posterior beliefs all
converge to the same value, $\CondE{\theta}{\{S_v\}_{v \in V}}$.  This is
the value that they would have converged to, had they all access to
each others' private measurements.  We show that the process converges
in at most $2N \cdot D$ iterations, where $N$ is the number of agents and $D$ is the diameter of the graph.

\textbf{Network Topology and Optimal Convergence.} We prove that certain network topologies permit convergence in $D$ steps, the fastest possible convergence time. Specifically, we show that networks whose underlying graph is distance-transitive (e.g. hypercube graphs and Johnson graphs) or is a tree have optimal convergence time, largely due to a high degree of symmetry that can be exploited by each agent when calculating their estimates.

\textbf{Privacy.} We develop a notion of privacy given each agent's information: the privacy between agent $v$ and $w$ is the variance of $v$'s best estimate of $w$'s private signal, at the end of the process. We derive a simple analytic expression for privacy, and show that on trees and on distance-transitive graphs a high degree of privacy is preserved: for most pairs of agents $v$ and $w$, $v$'s estimate of $w$'s private signal at the end of the process is not much more precise than it was in the beginning.

\subsection{Computational efficiency.}
We choose a model of computation in which agents can store real
numbers and carry out the basic arithmetic operations on them. A
feature of this computational model is that it circumvents such issues
as numerical stability, for example in the inversion of
ill-conditioned matrices. This modeling approach makes it easier to
construct efficient algorithms, but is arguably less realistic than a
model that takes these issues into account.

\subsubsection{The agents' calculation}

Let $\mathcal{W}$ be the vector space of Gaussian random variables spanned by the different $S_v$'s:
\begin{equation}
  \label{eq:rv_vector_space}
  \mathcal{W}=\left\{\sum_{v\in V}\beta_v S_v\mbox{ s.t. } \forall v:\beta_v \in \R\right\}.
\end{equation}
It is easy to convince oneself that this indeed is a vector space of finite dimension. Note that all the random variables in this space are normally distributed. Denote by $\mathcal{W}^1$ the subset of unbiased estimators of $\theta$ in 
$\mathcal{W}$: 
\begin{equation}
\mathcal{W}^1 = \left\{ \sum_{v\in V}\beta_v S_v\in \mathcal{W} \mbox{ s.t.} \sum_{v\in V}\beta_v = 1 \right\}.
\end{equation}

\begin{theorem}
  For all agents $w$ and times $t$, it holds that $X_w(t)\in\mathcal{W}^1$,     with
  $X_w(t) = \sum_v\beta_{wv}(t) S_v$ for some $\beta_{wv}(t)$. 
\end{theorem}
\begin{proof}
  We shall prove this by induction on $t$. At time $t=0$ the claim is
  true since $\beta_{wv}(0)$ is one when $w=v$ and zero
  otherwise. Assume that the claim is true until time $t$.

  Consider an agent $w$, and denote by $r_0,\ldots,r_k$ the random
  variables that agent $w$ has observed up to time $t$, with
  $r_0=S_w=X_w(0)$. Those are $w$'s own and its neighbors' past
  estimators.  By our assumption these are all in $\mathcal{W}^1$, and we
  can write $r_i=\sum_v A_{iv}S_v = (AS)_i$, where the coefficients of the matrix  $A$ are a simple re-indexing of the coefficients
  $\beta_{wv}(t)$, by some relation that maps each $w$ and $t$ to some
  $i$. Since by assumption $r_i\in \mathcal{W}^1$ then
  $\sum_v A_{iv}=1$.

  Denote by ${\bf r}$ the
  vector $(r_0,\ldots,r_n)$, denote by ${\bf 1}$ the
  vector $(1,\ldots,1)\in \R^n$, and denote by $C_{ij}$ the covariance between 
  $r_i$ and $r_j$, so that
  \begin{equation}
    \label{eq:C}
    C = \sigma^2 A A^T. 
  \end{equation}
  Then ${\bf r}$'s distribution is the normal
  multivariate distribution with covariance matrix $C$ and mean 
  ${\bf 1}\theta$ (since $r_i\in\mathcal{W}^1$), and
  the likelihood of $\theta$ given that agent $w$ has observed  ${\bf r}$ is
  \begin{equation}
    \label{eq:s_likelihood}
    p({\bf r}|\theta)=\frac{1}{\left(2\pi\right)^{n/2}|C|^{1/2}}
    e^{-\half({\bf r}-{\bf 1}\theta)^T C^{-1}({\bf r}-{\bf 1}\theta)},
  \end{equation}
  where $p(\cdot)$ denotes probabilities in our probability space.
  Note that in the case that $C$ is not invertible
  (equivalently, ${\bf r}$ is not linearly independent) we remove from
  it (and correspondingly from ${\bf r}$) a minimal set of columns and
  rows such that it becomes invertible. By corollary, $C$ is
  never larger than $N \times N$.

  The expression $({\bf r}-{\bf 1}\theta)^T C ^{-1}({\bf r}-{\bf 1}\theta)$ can
  be rewritten as
  \begin{equation*}
    {\bf 1}^T C^{-1}{\bf 1}\cdot\left(S - \frac{ {\bf 1}^T{\bf C}^{-1}{\bf r}}
        {{\bf 1}^T C^{-1} {\bf 1}}\right)^2 + B
  \end{equation*}
  with $B$ a normalization factor. Denote
  \begin{equation}
    \label{eq:gamma}
    {\bf \gamma} = \frac{ {\bf 1}^T C^{-1}}{{\bf 1}^T C^{-1}{\bf 1}},
  \end{equation}
  and note that $\sum_i\gamma_i=1$.

  We can now write
  \begin{equation}
    \label{eq:posterior_explicit}
    p({\bf r}|S)=\frac{1}{\sqrt{2\pi\tau^2}}e^{-(\theta-x)^2/2\tau^2}
  \end{equation}
  where
  \begin{equation}
    \label{eq:new_estimator}
    x=       \frac{ {\bf 1}^T  C^{-1}}
      {{\bf 1}^T C^{-1}{\bf 1}
    }
    {\bf r}=\sum_i\gamma_ir_i \quad\quad\mbox{and}\quad\quad \tau^2 = \frac{1}{{\bf 1}^T C^{-1}{\bf 1}}\cdot
  \end{equation}
  Note that $x$ is a linear combination of the observations that $w$
  made up to time $t$.

  The expected value of the multinormal
  distribution~\eqref{eq:posterior_explicit} is $x$, and therefore the
  maximum likelihood estimator is $x$. Since the prior is uniform the
  Bayes estimator is likewise $x$, and we have that $X_w(t+1)=x$.
  Then
  \begin{equation}
    X_w(t+1)=\sum_i\gamma_i\sum_{v}A_{iv}S_v
  \end{equation}
  and therefore 
  \begin{equation}
    \label{eq:new_beta}
    \beta_{wv}(t+1)=\sum_i\sum_{v}\gamma_i A_{iv}.     
  \end{equation}
  Since $\sum_i\gamma_i=1$ and $\sum_v A_{iv}=1$ then $\sum_v\beta_{wv}(t+1)=1$. We have shown then that $X_w(t+1)\in\mathcal{W}^1.$ We have also shown that to calculate $\beta_{wv}(t+1)$, given the  coefficients at time $t$, one need only invert $N$ matrices (one for each agent), of size at most $N \times N$ - certainly an efficient calculation. Furthermore, no knowledge of the $S_v$'s is needed, but only of the graph structure.
\end{proof}

We write below an algorithm that efficiently calculates all the
vectors $\gamma$ for all the agents at all time periods. Given this,
an agent can straightforwardly calculate its actions
using~\eqref{eq:new_estimator}.

We use here the notation introduced in the proof above, but add to it
explicitly the name of the calculating agent and the time
period. Hence $A^w_{iv}(t)$ is the $A_{iv}$ of agent $w$ at
time $t$, and likewise for $C,\beta$,$\gamma$ and $\tau$.
\begin{enumerate}
\item Calculate all $\beta_{wv}(t)$. At the first time period these
  are trivial, as we note above. At later time periods these are the
  result of the calculation of the previous time periods.
\item Calculate all $A^w_{iv}(t)$, as described above. This is a
  simple renaming of $\beta_{wv}(t)$.
\item Calculate all $C^w(t)$ by~\eqref{eq:C}.
\item Calculate all ${\bf \gamma}^w(t)$ by~\eqref{eq:gamma}.
\item Calculate all $\beta_{wv}(t+1)$ by~\eqref{eq:new_beta}.
\end{enumerate}

\begin{theorem}
  There exists an efficient algorithm to calculate the agents' actions.
\end{theorem}
\begin{proof}
  To calculate its action at time $t+1$, an agent $w$ can run the
  algorithm above up to time $t$ to calculate ${\bf \gamma}^w(t)$, and
  then use~\eqref{eq:new_estimator} to calculate $X_w(t+1)$. The
  running time is dominated by the inversion of the covariance
  matrices $C$, and hence is a small polynomial in $n$ and
  linear in $t$.
\end{proof}

\subsection{Learning efficiency}
\subsubsection{Convergence in $N^2$}

To show that the beliefs of the agents converge, we need only note
that being conditional probabilities over increasingly large
probability spaces, these beliefs are martingales. Then, because these
martingales are bounded in $L^2$, they converge.  However, the
following proof, which does not require the power of martingales,
shows that convergence in fact takes places in at most $N^2$
iterations, and that furthermore all agents converge to the same
belief. This result is in spirit related to previous martingale
results in social learning such as Aumann's ``Agree to Disagree''
result~\cite{Aumann:76, Geanakoplos:92} and the work of Borkar and
Varaiya~\cite{BorVar:82}.  The proof is similar to the one presented
by DeMarzo et al.\ \cite{DeMarzo:03}.

When two neighboring agents have different beliefs, then at least one
of them will learn from the other and improve its estimator: Assume
agents $u$ and $v$ are neighbors with different estimators, and agent
$v$'s belief has variance lower than or equal to that of agent
$u$. Then agent $v$'s estimator is necessarily not in the space
spanned by the estimators previously seen by $u$. Hence the dimension
spanned by $u$'s memory will increase at this iteration. We have thus
shown that in each iteration, unless all the agents have the same
estimator, at least one of them increases the dimension of its space
by at least one. Since the maximum dimension possible is $N$ then
convergence will occur after at most $N^2$ steps, and all agents will
converge to the same belief.

\subsubsection{Convergence in $2 N \cdot D$ iterations}
A slightly more subtle argument proves a better bound for the convergence
rate, namely $2 N\cdot D$, where $D$ is the diameter of the graph. The idea
of the proof is that the current estimator of an agent $u$ cannot remain
unchanged for many steps, unless a growing neighborhood around $u$ also remains
stagnant. The formal proof uses the following lemma.

\begin{lemma}
  If some agent's estimator has not changed for $2 D$ steps then the
process has converged.
\end{lemma}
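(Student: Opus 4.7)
The plan is to propagate the stagnation of $x_u$ outward along the graph, showing that stagnation at a vertex forces stagnation at its neighbors on a slightly shorter time window, and to iterate until every agent is forced to share a common value at the middle of the original window. First I would derive from the BLUE formula in the previous theorem a clean characterization of one-step stagnation: $x_u(t+1)=x_u(t)$ holds iff $\Cov(x_u(t),x_v(t)-x_u(t))=0$ for every neighbor $v$ of $u$. This is simply the first-order optimality condition for the orthogonal projection of the origin onto the affine hyperplane $\{\beta:\mathbf{1}^{T}\beta=1\}$ inside the enlarged memory subspace: $x_u(t)$ is the BLUE of the new memory iff its coefficient vector stays orthogonal to every new zero-sum direction $\beta_v(t)-\beta_u(t)$. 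Under joint Gaussianity, this is the same as $x_v(t)-x_u(t)$ being independent of $x_u(t)$.

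The core step is a propagation claim: if $x_u$ is constant with value $c$ on an interval $[a,b]$, then for every neighbor $v$ of $u$ the estimator $x_v$ is constant with the same value $c$ on the shorter interval $[a+1,b-1]$. The intuition is that $v$ observes the constant value $c$ from $u$ at every one of $b-a$ consecutive observation times, and, combined with the stagnation orthogonality conditions applied at several consecutive $t\in[a,b-1]$, this pins down $x_v$ throughout the interior of the interval. The main technical obstacle here is that a single stagnation equality only gives a first-order covariance constraint on $x_v(t)$, which is in general strictly weaker than $x_v(t)=x_u(t)$; the argument must exploit the accumulation of several consecutive constraints, together with the fact that each agent can reconstruct its neighbors' BLUE formulas from the graph structure, to upgrade the covariance equalities into an honest pointwise equality.

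Iterating the propagation along any shortest path from $u$, if $x_u$ is constant on $[t_0,t_0+2d]$ then every agent at graph distance $k\le d$ has its estimator equal to $c$ on the interval $[t_0+k,t_0+2d-k]$, and in particular at time $t_0+d$. Since the diameter is $d$ this reaches every agent, so the value $c$ is simultaneously held by the whole network at time $t_0+d$. To finish, I would observe that once all agents hold the same value $c$, each observation $c$ from a neighbor is already the observer's own current BLUE and hence lies in its memory span, so no agent's estimator can change thereafter. Hence $x_v(t)=c$ for every $v$ and every $t\ge t_0+d$, which is exactly the statement that the process has converged.
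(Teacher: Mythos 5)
Your outer structure matches the paper's proof exactly: propagate stagnation from $u$ to its neighbors while shrinking the time window by one at each end, conclude that all agents at distance $i$ share the value on $[t_0+i,\,t_0+2d-i]$, hence agree at time $t_0+d$, and then argue persistence. Your closing persistence argument (a new observation already in the memory span cannot change the BLUE) is also fine. The problem is the core propagation claim, which you state but do not prove: you explicitly name as ``the main technical obstacle'' the fact that the stagnation condition $\Cov\bigl(x_u(t),\,x_v(t)-x_u(t)\bigr)=0$ is only a first-order orthogonality constraint, strictly weaker than $x_v(t)=x_u(t)$, and you then gesture at ``accumulating several consecutive constraints'' without showing how the accumulation closes the gap. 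As written, this is an acknowledged hole at the one place where the lemma actually has content.

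The paper closes this step with a different and much shorter argument that you should compare against: let $\mathcal{U}$ be the span of $u$'s memory at time $t_0+2d$, so that $x$ is the \emph{unique} minimum-variance unbiased estimator in $\mathcal{U}$. For a neighbor $w$ and $t_0+1\le t\le t_0+2d-1$, the estimator $x_w(t)$ lies in $\mathcal{U}$ (because $u$ observes it at time $t+1\le t_0+2d$), and it satisfies $\Var(x_w(t))\le\Var(x)$ (because $w$ observed $x=x_u(t_0)$ at time $t_0$, so $x$ is in $w$'s memory span and $x_w(t)$ is at least as good). Uniqueness of the minimizer forces $x_w(t)=x$. Note that this two-sided optimality comparison is exactly the ingredient your covariance route is missing: your orthogonality condition gives $\Cov(x,x_v(t))=\Var(x)$, which by Cauchy--Schwarz yields $\Var(x_v(t))\ge\Var(x)$; only when combined with the reverse inequality $\Var(x_v(t))\le\Var(x)$ (from $v$ having already seen $x$) do you get equality in Cauchy--Schwarz and hence $x_v(t)=x$. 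So your approach can be repaired, but the repair requires importing the paper's key observation rather than any accumulation of constraints over consecutive times.
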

\begin{proof}
  Assume agent $u$'s estimator does not change from iteration $t_0$ to $t_0 + 2D$, so that
  \[
  X_u(t_0)=X_u(t_0+1)=\cdots=X_u(t_0+2D). 
  \] 
Denote $x:=X_u(t_0)=\cdots=X_u(t_0+2D)$,
and let $\mathcal{U}$ be the space spanned by the
estimators in $u$'s memory at time $t_0 + 2D$. Then by definition of the process
$x$ is the optimal unbiased estimator in $\mathcal{U}$.

Let $w$ be a neighbor of $u$. Then $w$'s estimator at time $t_0+1$, $X_w(t_0+1)$,
is in $\mathcal{U}$, since $u$ observes $X_w(t_0+1)$ at time $t_0+2$. Now $x$ by
definition is better than any estimator in $\mathcal{U}$, and so, since
$w$ has observed $x$ at time $t_0$, it must be that $X_w(t_0+1)=x$. By the
same argument $X_w(t)=x$ for $t_0+1\leq t\leq t_0+2D-1$.

Applying this argument inductively, it follows that at time
 $t_0+i \leq t \leq t_0+2D-i$ all
the agents at distance $i$ from $u$ have estimator $x$, and so at time $t_0+D$
all agents have the same estimator. 

Recalling that at each iteration an agent's estimator is a weighted
average of those of its neighbors, we conclude that all nodes will
have estimator $x$ for all times $t \geq t_0+D$.  The proof follows.

\end{proof}

\begin{theorem}
  The process stops after $2 N \cdot D$ iterations.
\end{theorem}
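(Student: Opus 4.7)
The plan is to combine the preceding lemma with a simple counting argument that bounds how many times any single agent's estimator can change over the course of the process. I would fix an agent $u$ and observe that $x_u(t)$ is the optimal unbiased estimator in the linear span of the observations currently in $u$'s memory, given by the formula $x = ({\bf 1}' {\bf C}^{-1} {\bf r}) / ({\bf 1}' {\bf C}^{-1} {\bf 1})$ from the previous theorem. This expression depends on ${\bf r}$ only through the subspace it spans and the induced covariance structure, so if $u$'s observation span does not grow between time $t$ and $t+1$, then $x_u(t+1) = x_u(t)$. Contrapositively, every change $x_u(t+1) \neq x_u(t)$ forces a strict increase in the dimension of $u$'s observation span. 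Since this span starts at dimension $1$ (containing only $S_u$) and always lies inside $\mathcal{W}$, whose dimension is $n$, the estimator $x_u$ can change at most $n-1$ times over the entire run of the process.

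Next I would argue by contradiction. Suppose the process has not converged by iteration $2n\cdot d$. I would partition the iterations into $n$ consecutive blocks $[2d \cdot i,\, 2d \cdot (i+1)]$ for $i = 0, 1, \ldots, n-1$. The preceding lemma, read contrapositively, says that if convergence has not occurred by time $T$, then for every $t_0 \leq T - 2d$ and every agent $v$, the chain $x_v(t_0) = x_v(t_0+1) = \cdots = x_v(t_0+2d)$ must fail (otherwise the lemma would already force convergence by time $t_0 + d \leq T$). Applied to agent $u$ in each of the $n$ blocks, this forces at least one change of $x_u$ inside each block; since the change times in different blocks are disjoint, $x_u$ must change at least $n$ times over $[0, 2n\cdot d]$. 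This contradicts the bound of $n-1$ established in the first paragraph, so the process must already have converged by iteration $2n\cdot d$.

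I expect the only subtle step is justifying cleanly that an estimator change forces a strict dimension increase in $u$'s observation span. The explicit posterior formula makes this essentially immediate once one notes that the minimum-variance unbiased estimator in a subspace is determined by the subspace alone, but a little care is needed regarding the edge case where ${\bf C}$ is singular and redundant rows and columns are pruned, as in the proof of the previous theorem. Once that is settled, the remainder is a direct pigeonhole application of the preceding lemma.
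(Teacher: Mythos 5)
Your proposal is correct and follows essentially the same route as the paper's proof: the paper likewise counts estimator changes by noting that each change forces the dimension of the agent's observation span to strictly increase (bounding the number of changes by $n$), and invokes the preceding lemma to force at least one change in every window of $2d$ iterations until convergence. Your version merely makes the pigeonhole bookkeeping over blocks explicit and sharpens the change count to $n-1$, neither of which alters the argument.
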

\begin{proof}
  Every time an agent's estimator changes, the dimension of the space the
agent's memory spans increases by at least one, and so this cannot happen more
than $N$ times. Since This must happen every $2 D $ steps as long as the 
process hasn't converged,
the process must stop after $2 N \cdot D$ iterations.
\end{proof}

\subsubsection{Convergence to the optimum}
At any particular iteration, any node $v$ contains $S_v$ in the space of its estimators.
At each iteration the estimator at $v$ is then of the form $a S_v + b S$ where $S$ is an unbiased linear estimator based on some signals (not including $S_v$), and $a+b=1$. Note that the variance of this estimator is $a^2 + b^2 Var(S)$ and it is minimized when $a = Var(S)/(1+Var(S))$.
Since $S$ depends on all the signals but $S_v$ its variance is at least $1/(N-1)$ and therefore $a$ is at least $1/N$.

Hence all the agents, at all iterations, give their own estimators
weight which is at least $1/N$. Since they all converge to the same
estimator, and since the sum of the weights in this estimator must be
one (since it, too, is unbiased), then the weights must all be $1/N$,
and the limiting estimator is the simple average of the private measurements, as stated.

\section{Diameter Convergence}
	We say that two agents $v$ and $w$ have observed each other at time $t$ if $d(v,w) \leq t$. Fundamentally, the fastest possible convergence time will occur in a number of steps equal to the diameter $D$ of $G$. This is because the optimal estimator requires each agent to average over all of the private signals, and there must have been $t > D$ steps before the most distant pair of agents can observe each other. Here we show that at least two families of graphs achieve this optimal convergence. We start by proving a general lemma that will be useful in this analysis.
	\begin{lemma}\label{lem:diam}
		If each agent's calculation corresponds to averaging all the $S_w$ it has observed so far, e.g. all of the $\beta_wv$ are either equal to $0$ if $d(v,w) > t$ and $\frac{1}{m}$ for some integer $m> 0$ if $d(v,w) \leq t$, then all beliefs converge after $D$ steps. 
	\end{lemma}
	\begin{proof}
		By definition, there exist a pair of agents $v$ and $w$ between whom $d(v,w) = D$. This means that, after $D$ steps, $w$ will observe an action for which $\beta_{wv} > 0$, and vice versa. Since by assumption each agent's calculation is an average of the all $S_w$ observed so far, and $v$ and $w$ both have weights $\beta_{wv} \neq 0$, it must be true that $v$ and $w$ will both have beliefs which have converged (e.g. $\beta_{wv} = \frac{1}{N}$). Since $v$ and $w$ were the most distant pair of agents, this implies that all other agents have beliefs that have converged as well.
	\end{proof}
	
	This lemma allows us to reduce the diameter convergence proof to simply showing inductively that, if all agents compute uniform averages at step $t$, then they will also do so at step $t+1$. We use this approach to prove that two families for graphs, trees and hypercubes, converge in precisely $D$ steps. In general we find that all distance-transitive graphs (e.g. hypercubes, Johnson Graphs, etc.) have this convergence property.
	
	\begin{theorem}[Trees]\label{thm:trees}
		Beliefs converge in $D$ steps if $G$ is a tree.
	\end{theorem} 
	\begin{proof}
		We start by selecting an arbitrary agent $w$ to be the root. It is straightforward to see that at $t=1$, agent $w$ simply averages its own observation $S_w$ with those of each of its children $c \in C_w$. Now assume that, up to time $t$, each agent's current estimator is the average of all the $S_v$'s it has had access to so far, e.g. 
		$$X_w(t) = \sum_{v \in B_i(t)}A_{wv}S_v = \frac{1}{|B_w(t)|}\sum_{v \in B_w(t)}S_v,$$
		where $B_w(t)$ is the number of agents within graph distance $t$ of agent $w$. This means that, at time $t+1$, $w$ has access to its childrens' observations $X_{c}(t)$ for $c \in C_w$, as well as all of its previous observations $x_w(t')$ for $t' \leq t$. We define $G_{c}(t)$ to be the subtree of the graph $G$ corresponding to child $c$ up to depth $t$. Similarly, let $B_{c}(t)$ be the set of agents in a ball of radius $t$ that each child $c$ has observed up to time $t$.
		
		If we think about what information is encoded in $X_{c}(t)$, we know that the agents whose private observations $S_v$ have non-zero weights in this estimate will be those in $G_{c}(t+1)$, agent $w$ itself, and agents in $G_{c'}(t-1)$ (with $c' \neq c$). This means that each $X_{c}(t)$ contains duplicate observations in $B_w(t-1)$, and non-overlapping observations of all agents radius exactly $t+1$ from $w$. This means that $w$ can calculate 
		\small\begin{multline}\label{eq:tree}
			\sum_{c \in C_w}\frac{|B_{c}(t)|}{|B_w(t+1)|} X_{c}(t) 
          \\  - (|C_w|-1)\frac{|B_w(t-1)|}{|B_w(t+1)|} X_w(t-1) 
           \\ = \frac{1}{|B_w(t+1)|}\sum_{v \in B_w(t+1)} S_v.
		\end{multline}\normalsize

		Since $\frac{1}{|B_w(t+1)|}\sum_{v \in B_w(t+1)} S_v$ is the optimal estimate of all agents that $w$ has observed at time $t$, it must be true that
        $$X_w(t+1) = \frac{1}{|B_w(t+1)|}\sum_{v \in B_w(t+1)} S_v.$$
        Since $w$ was an arbitrary agent, these results must hold for all agents. Since the inductive step holds, we see that each agent's calculation corresponds to a uniform average at each step and by lemma \ref{lem:diam}, beliefs converge in $D$ for the case of tree graphs.
	\end{proof}
	One straightforward consequence of~\eqref{eq:tree} is that if $w$ is a leaf, then it simply copies its neighbor at each time step because $|C_w| = 1$ and $|B_w(t+1)| = |B_c(t)|$.
	\begin{theorem}[Hypercubes]\label{thm:hcube}
		Beliefs converge in $D$ steps if $G$ is a hypercube.
	\end{theorem}
	\begin{proof}
		To begin we index agents on an $n$-dimensional hypercube with vectors in $\mathbb{F}_2^n$. Edges connect Hamming distance 1 agents. More generally, the distance of two agents in the graph is the Hamming distance of their indices. We proceed inductively in much the same way as the previous theorem. Without loss of generality, we pick an agent to index as $\textbf{0}$, the vector of all zeros whose belief we denote as $X_0$. Agent $\textbf{0}$'s $i$th neighbor can then be indexed by $\delta_i$, the vectors whose $i$th entry is $1$ and all other entries are $0$ whose belief we denote as $x_{\delta_i}(t)$.
		
		Again we see that the base case is simple, as each agent averages themselves with their neighbors at $t=1$. We assume that, up to time $t$, each agent's estimate is the average of all agents in a ball of radius $t$ around it,
		$$X_w(t) = \frac{1}{|B_w(t)|}\sum_{v \in B_w(t)}S_v, \forall w \in \mathbb{F}_2^n.$$
		At time $t+1$, agent $\textbf{0}$ receives beliefs from all $n$ of its neighbor at time $t$, $X_{\delta_i}(t)$. We first note that these beliefs can be decomposed in the following way,
		\begin{multline}\label{eq:decomp}
       		X_{\delta_i}(t) = \frac{|B_0(t-1)|}{|B_{\delta_i}(t)|}X_{0}(t-1) \\+ \frac{1}{|B_{\delta_i}(t)|}\sum_{v \in H_{\delta_i}(t-1) \cup H_{\delta_i}(t)}S_v,
        \end{multline}
		where $H_w(t)$ represents the hull of agents at distance exactly $t$ from agent $w$. While in the case of trees we took advantage of the non-overlapping nature of agents at radius $t+1$ in distinct subtrees, for hypercubes we do not necessarily have that $H_{\delta_i}(t) \cap H_{\delta_j}(t) = 0$ for $i \neq j$. This will generally not be the case after $t = 0$, so to compute a uniform average we must make sure that each agent is in exactly the same number of hulls $H_{\delta_i}(t)$.
		
		To prove this, we note that agents at radius $t+1$ have indices corresponding precisely to vectors in $\mathbb{F}^n_2$ with exactly $t+1$ non-zero entries. From this we know that each agent at radius $t+1$ will be accounted for in exactly $t+1$ of the neighbors $\delta_i$, specifically in every $\delta_i$ corresponding to a non-zero entry of that agent's index. The same will be true for agents at radius $t$. Using equation \eqref{eq:decomp}, we see that
	\small\begin{eqnarray*}
        \sum_{i = 1}^n \frac{|B_{\delta_i}(t)|}{|B_0(t+1)|} x_{\delta_i}(t) 		&=&  (n-t)\frac{|B_0(t-1)|}{|B_0(t+1)|}X_0(t-1) \\ 
        &-& \frac{|B_0(t)|}{|B_0(t+1)|}X_0(t) \\ 
        &+&  \frac{t+1}{|B_0(t+1)|}\sum_{w \in B_0(t+1)}S_w.
	\end{eqnarray*}\normalsize

	Rearranging, we have that 
\small\begin{multline}\label{eq:hcube}
  \frac{1}{|B_0(t+1)|}\sum_{w \in B_0(t+1)}S_w 
  		\\ = \frac{1}{t+1} \Bigg( \sum_{i = 1}^n \frac{|B_{\delta_i}(t)|}{|B_0(t+1)|} X_{\delta_i}(t) \\
        + \frac{|B_0(t)|}{|B_0(t+1)|}X_0(t) \\
        - (n-t)\frac{|B_0(t-1)|}{|B_0(t+1)|}X_0(t-1) \Bigg).
\end{multline}\normalsize
Now we can again use lemma \ref{lem:diam} to prove that, if $G$ is a hypercube, then the process converges in $D$ steps.
\end{proof}

Note that the only property of the hypercube we used in Theorem \ref{thm:hcube} is that, at time $t$, estimates from all agents at a distance $t$ from $v$ show up in an equal number of neighboring estimates, in this case $t$ of them. More generally, we can observe the following:  Let $G$ be a graph such that for any vertices $u,v,v'$ with $d(u,v) = d(u,v')$ there is an automorphism that fixes $u$ and swaps $v$ and $v'$. Then the process converges in $D$ steps on the graph $G$. Such an automorphism exists for all distance-transitive graphs \cite{wiki:dtg}.
\begin{definition}
	A graph $G$ is distance transitive if, given any two vertices $v$ and $w$ at any distance $i$, and any other two vertices $x$ and $y$ at the same distance, there is an automorphism of the graph that carries $v$ to $x$ and $w$ to $y$.
\end{definition}
We have thus in fact proved the following, more general theorem.
\begin{theorem}[Distance Transitive Graphs]\label{thm:distance-transitive}
		Beliefs converge in $D$ steps if $G$ is a distance transitive graph.
\end{theorem}

\section{Privacy}  
	In this section we define and analyze a notion of privacy.     The question the we ask is: given that the process aggregates information optimally, so that each agent learns the average of all the private signals, is it possible that agents do not learn much beyond that about each others' signals? We show that for some graphs this is indeed the case.
    
    \begin{definition}
    	Let $\Pi_{wv}$ be the normalized variance of an agent $w$'s estimate of another agent $v$'s private signal. Formally, $\Pi_{wv} = \frac{1}{\sigma^2}\text{Var}_w(S_v|\textbf{r})$, where $w$ is the agent making the estimate and $\textbf{r}$ is the vector of $w$'s observations at the time of convergence.
    \end{definition}

    Each element of $\Pi$ is in the range $[0,1]$, where $\Pi_{wv} = 0$ if $w$ can exactly compute $S_v$ from its observations and $\Pi_{wv} = 1$ when $w$ has not yet observed $v$. Clearly $\Pi_{ww} = 0$, since every agent knows its own private measurement. In addition, if $v$ is a neighbor of $w$ then $\Pi_{wv} = 0$, since each agent directly observes its neighbors.  
		
	Things become more complicated once we start looking at agents that are at a distance greater than 1. As a motivating example, we will consider social learning on the star graph $G = K_{1,3}$. We denote agent $0$ to be at the center of the star and agents $1,2$ and $3$ to be the leaves.
At $t=0$, agent $1$ observes only its private measurement $S_1$, so $X_1(0) =\textbf{r} = S_1$. At time $t = 1$, Agent $1$ receives the estimate $X_0(0) = S_0$ from its only neighbor, so $\textbf{r} = [S_0,S_1]$ and its estimate of the world is $X_1(1) = \frac{1}{2}(S_0 + S_1)$. Since agent $0$ has seen everyone's private measurement at $t=1$, its estimate is $X_0(1) = \frac{1}{4}(S_0 + S_1 + S_2 + S_3)$. At $t = 2$, agent $1$ observes $X_0(1)$ and so they have $\textbf{r} = [S_0,S_1,\frac{1}{4}(S_0 + S_1 + S_2 + S_3)]$. Since each agent knows the structure of the graph, agent $1$ (along with everyone else) knows that it has just received the optimal estimate. It now has $X_1(2) = \frac{1}{4}(S_0 + S_1 + S_2 + S_3)$, and the process is complete.
	
	From what we have seen, we know that we can rewrite $\textbf{r}$ at the last step as
    \begin{equation}\label{eq:example}
	\textbf{r} = A\textbf{S} =
    \begin{bmatrix}
	1 & 0 & 0 & 0 \\
	0 & 1 & 0 & 0 \\
	\frac{1}{4} & \frac{1}{4}  & \frac{1}{4}  & \frac{1}{4}
	\end{bmatrix}\begin{bmatrix}
	S_0 \\ S_1 \\ S_2 \\ S_3
	\end{bmatrix}.
	\end{equation}
	Intuitively, it seems that the best agent $1$ can do is to average $S_2$ and $S_3$, which can be computed from
	\begin{equation}\label{eq:est}
	2r_2 - \frac{1}{2}r_0 - \frac{1}{2}r_1 = \frac{1}{2}(S_2+S_3).
	\end{equation}
	Note that these coefficients can be computed from the right pseudoiverse of $A$. With this estimator, the level of privacy corresponds to 
	$$\Pi_{12} =  \frac{1}{\sigma^2}\text{Var}\left(S_2\Big|\frac{1}{2}(S_2 + S_3)\right) = \frac{1}{2}.$$
	
	To formalize this intuition, we recall the covariance matrix $C$ between the components of $\textbf{r}$ is $C = \sigma^2 AA^T$. What we need to be able to compute then is the covariance between $\textbf{r}$ and $\textbf{S}$. If we think of $C$ as a transformation of the Gaussian vector space spanned by the components of $\textbf{r}$, the corresponding transformation in the space spanned by the $S_v$ is 
	$$C' = A^{\dagger}CA^{\dagger T},$$
	where $A^{\dagger} = A^T(AA^T)^{-1}$ is the right pseudoinverse of $A$. 
	Now we show that the variance of the agent's estimator is the error corresponding to the projection from the subspace spanned by $\textbf{r}$ into the space spanned by $\textbf{S}$: .
	
	\begin{theorem}
		The privacy $\Pi$ can be computed from the equation $\Pi_{wv} = \text{diag}(I - C')_v$.
	\end{theorem}
	\begin{proof}
		 we note that the conditional variance of a multivariate Gaussian is
		 $$\text{Var}(X_1|X_2) = \Sigma_{11} - \Sigma_{12}\Sigma_{22}^{-1}\Sigma_{21}.$$
		 
		 Since we are computing $\text{Var}_w(\textbf{S}|\textbf{r})$, we know that 
		 \begin{eqnarray}
		 \Sigma_{11} &=& \sigma^2I, \\
		 \Sigma_{22} &=& C = \sigma^2AA^T, \nonumber \\
		 \Sigma_{12} &=& \Sigma_{21}^T =  \sigma^2 A^T \nonumber
		 \end{eqnarray}
		 Combining these, we have
		 \begin{eqnarray*}
		 	\text{Var}_w(\textbf{S}|\textbf{r}) &=& \Sigma_{11} - \Sigma_{12}\Sigma_{22}^{-1}\Sigma_{21} \\
		 	&=& \sigma^2I - \sigma^2A^T(AA^T)^{-1}A \\
		 	&=& \sigma^2(I - C')
		 \end{eqnarray*}
		 From this, the variance of a given estimator is
		 $$\Pi_{wv} = \frac{1}{\sigma^2}\text{Var}_w(\textbf{S}|\textbf{r})_{vv} = \text{diag}(I - C')_v.$$
	\end{proof}
	
	Returning to the example above, we have that 
	\begin{equation*}
	I - C' = 
	\begin{bmatrix}
	0 & 0 & 0 & 0 \\
	0 & 0 & 0 & 0 \\
	0 & 0 & \frac{1}{2} & -\frac{1}{2} \\
	0 & 0 & -\frac{1}{2} & \frac{1}{2}
	\end{bmatrix},
	\end{equation*}
	so the privacy of the estimators is 
	$$\Pi_{w,0:3} = \left[0,0,\frac{1}{2},\frac{1}{2}\right].$$
	Since agent 0 directly observed $S_0$ and $S_1$ and only observed the average of $S_2$ and $S_3$, this matches precisely what we would expect.
	
	The next step is to examine how the structure of the graph and the convergence rate relate to the notion of privacy we have discussed. One immediate result is that if any agent has a rank $N$ covariance matrix, then as mentioned earlier $A$ is full rank and invertible. If $A$ is invertible, then that agent can directly calculate $\textbf{S} = A^{-1}\textbf{r}$, and thus there is no privacy with respect to that agent. This result can also be seen from the above theorem, as an invertible $A$ tells us that
	$$I - A^T(AA^T)^{-1}A = I - I = 0.$$
	
	More generally, we can apply these results to the two families of graphs studied in the previous sections, trees and hypercubes.
    \begin{theorem}
    	The privacy between an agent $w$ in a tree and an agent $v$ at distance $k$ away is $\Pi_{wv} = 1-\frac{1}{|H^c_w(k)|}$, where $H^c_w(k)$ is the set of agents at distance exactly $k$ from $w$ in the subtree containing $v$.
    \end{theorem}
    \begin{proof}
    We know from theorem \ref{thm:trees} that $w$ is able to compute from its observations the average of all agents in $H^c_w(k)$, the set of agents exactly distance $k$ from $w$ in the subtree $G_c$. Since this is the only observation $w$ gets that has any new information about $v$. This means that 
   \begin{multline*}
   \Pi_{wv} = \text{Var}_w \left(S_v \Bigg |\frac{1}{|H^c_w(k)|}\sum_{v' \in H^c_w(k)}S_{v'} \right) \\= 1-\frac{1}{|H^c_w(k)|}.
   \end{multline*}
    \end{proof}
	For the special case of balanced $m$-ary trees, we see that the privacy between the root agent and any agent at depth $k$ is precisely $\text{Var}_w(S_v|\textbf{r}) = \sigma^2(1-\frac{1}{m^{k-1}})$.
    
    \begin{theorem}
    If $G$ is a hypercube, then the privacy between agents at distance $k$ apart is $\Pi_{wv} = 1 - \frac{n}{\binom{n}{k}}$.
    \end{theorem}
	\begin{proof}
		 We note that that, for an $n$-dimensional hypercube, $v$ has precisely $\binom{n}{k}$ agents at distance $k$ away. We also know that, at step $k$, $v$ received $n$ linearly independent observations from its neighbors that contain information about averages over subset of neighbors at distance $k$. From theorem \ref{thm:hcube} we know we can isolate just the terms 
         \small
        $$\frac{|B_v(t)|}{|H_v(t)|}x_v(t) - \frac{|B_v(t-1)|}{|H_v(t)|}x_v(t-1) = \frac{1}{|H_{v}(t)|}\sum_{w \in H_{v}(t)}S_w$$\normalsize
         Now let $A$ be the matrix whose rows only contain coefficients from these equations. From this, we can compute
    \small\begin{eqnarray*}
    	\text{Tr}(I - A^T(AA^T)^{-1}A) 
        &=& \text{Tr}(I) - \text{Tr}(A^T(AA^T)^{-1}A) \\
        &=& \binom{n}{k} - \text{Tr}(AA^T(AA^T)^{-1}) \\
        &=& \binom{n}{k} - n.
    \end{eqnarray*}\normalsize
    By symmetry, we observe that the privacy between an agent $w$ and any agent $v$ at distance $k$ on the hypercube must be the same for all $v$. Since there are $\binom{n}{k}$ agents at distance $k$, we know that
    $$\Pi_{wv} = \frac{1}{\binom{n}{k}}\text{Tr}(I - A^T(AA^T)^{-1}A) = 1 - \frac{n}{\binom{n}{k}}.$$
	\end{proof}
    
    This theorem demonstrates an interesting privacy pattern on this graph: privacy is zero for neighbors, is maximized near the middle of the graph (i.e., where $k = [\frac{n}{2}]$) and goes back down to zero for nodes of the other side of the graph (i.e., where $k=n$). This result also generalizes to distance transitive graphs. If $d(w,v) = k$, then we will have privacy equal to
    $$\Pi_{wv} = 1 - \frac{H_w(1)}{H_w(k)},$$
   i.e. one minus the ratio of the number of neighbors of $w$ to the number of agents at distance $k$ from $w$.
   
\section{Follow-up work and future research directions}
Since the first version of this paper~\cite{mossel2010efficient}
appeared online, it has influenced a number of other studies (see.,
e.g.,~\cite{golub2012homophily, jadbabaie2012non,
  corazzini2012influential, frongillo2011social, mossel2012agreement,
  eksinlearning, eksin2013learning}), some of which are direct
generalizations of this work. We discuss some of these and other
possible research directions below.

A property of this model that DeMarzo et al.\ found unrealistic is the
requirement that all agents know the structure of the social
network. While indeed this may be difficult to justify for some large
networks, {\em it is perhaps not strictly necessary}; in order to
perform their calculations, the agents need to know the covariance
between the estimators of {\em their neighbors only}.  In our model,
they derive this knowledge from the structure of the graph, but in
principle it may be derived by other means. This observation (which
was clarified in discussions with Rafael M.\ Frongillo, Grant
Schoenebeck and Adam Kalai, whom we would like to thank) presents an
opportunity for follow-up work involving some variant of this model,
which does not require knowledge of the network, but still preserves
rationality, tractability and efficient learning.

Another promising direction is to consider the case that the state of
the world is not constant but changes stochastically as the agents try
to estimate it. This is pursued in~\cite{frongillo2011social}, for the
case of the complete network.

A further natural generalization is the consideration of more
complicated utilities. For example, each agent's utility could depend
also on the actions of others. This is pursued
in~\cite{eksinlearning}.

Not much need be changed in this model to make the agents strategic.
For example, consider a model that is identical to ours, except that
agents are no longer myopic, but take an action that optimizes their
expected future discounted gains (i.e. the expectation of their
averaged future gains, with exponentially decreasing weights). Agents
will now take into account the impact of their actions on their peers,
and so it seems plausible that strategic behavior will emerge, and in
fact this is proved for similar models in~\cite{rosenberg:06}. Their
work leads us to conjecture that this modified non-myopic model may
feature strategic interactions on social networks with tractable
calculations. No other such example is known - to the best of our
knowledge.

The results established in this paper show convergence in $O(N \cdot
D)$.  A natural question is whether this bound can be
improved. Certainly, convergence cannot happen faster than $O(D)$ -
the time it takes information to propagate through the network. For
binary trees, where the diameter is $O(\log N)$, convergence does
happen in $O(D)$, as it does for cliques and distance-transitive graphs. However,
simulations have led us to believe that convergence in general is not
that fast, and requires - we conjecture - $O(N)$ steps.

In our simulations we sampled a population of $d$-regular graphs for a wide range of $d$. The result always converges in $N/d$ steps, with every agent increasing the dimension of the space
its memory spanned by $d$, at every iteration.  This may hint that
convergence time may, in some sense, be inversely proportional to the
degrees of the graph vertices. Additionally we find that distance-transitive graphs (which converged in $O(D)$) converge in $O(N/d)$ when a few pairs of edges are randomly swapped. This means that a highly symmetric graph will behave as though it were completely randomized under mild perturbations (usually two or three swaps will suffice).  We thus conclude with the following
conjecture and open problem:

\begin{conjecture}
For any graph the learning process converges in $O(N)$ iterations.
\end{conjecture}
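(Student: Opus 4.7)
The plan is to strengthen the dimension-counting argument behind the $2n\cdot d$ bound by tracking a global potential that must grow at rate $\Omega(n)$ per iteration as long as the process has not converged. A natural choice is $\Phi(t) := \sum_{v \in V} \dim \mathcal{U}_v(t)$, where $\mathcal{U}_v(t) \subseteq \mathcal{W}$ is the subspace spanned by the estimators that agent $v$ has observed through time $t$. Since $\Phi(t) \leq n^2$ always and $\Phi(t)$ is monotone non-decreasing, an $\Omega(n)$ increment per non-converged iteration would immediately give the desired $O(n)$ bound.

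The first concrete step would be to formalize the phenomenon reported in the simulations, namely that on a $3$-regular graph each agent's memory dimension grows by roughly $\deg(v)$ at every non-converged iteration. Specifically, one would try to show that if $w_1,\ldots,w_k$ are the neighbors of $u$ and the estimators $x_{w_1}(t),\ldots,x_{w_k}(t)\in\mathcal{W}^1$ are distinct, then a linear fraction of them must lie outside $\mathcal{U}_u(t)$, so that $\dim\mathcal{U}_u(t+1)-\dim\mathcal{U}_u(t)$ is comparable to $\deg(u)$. Summing over $u\in V$ and using $\sum_u \deg(u) = 2|E| \geq 2(n-1)$ for a connected graph, one would then obtain $\Phi(t+1)-\Phi(t) = \Omega(n)$.

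The main obstacle is ruling out systematic redundancy: the neighbors of $u$ may broadcast distinct estimators that all happen to lie in $\mathcal{U}_u(t)$, in which case $u$'s dimension does not grow at all that round, even though the process has not globally converged. To control this, one would want a localized version of the $2d$-stagnation lemma already proved, showing that if $u$'s dimension fails to increase at time $t$, then a ball of radius $r$ around $u$ is constrained to share $u$'s current estimator at an appropriate earlier time, so that stagnant regions consume their dimension budget in a correlated fashion rather than independently. Coupling this geometric rigidity with the fact that the whole-graph estimator $\frac{1}{n}\sum_v S_v$ is the unique fixed point should force at least one of the ``frontier'' agents between a stagnant region and an active one to strictly learn, in numbers proportional to the boundary size.

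A complementary line of attack is to resolve the conjecture first under symmetry or expansion. For vertex-transitive graphs, all agents learn at the same rate and $\Phi(t)$ increases by exactly $n$ times the common per-agent increment; for graphs with a spectral gap, one might translate edge-expansion into a lower bound on the number of agents whose dimension strictly grows each round. These special cases would also clarify whether the $O(n)$ bound is truly degree-independent or whether the conjecture should be refined to something like $O(n/\bar{d})$, where $\bar{d}$ is the average degree, which the cited simulations seem to suggest.
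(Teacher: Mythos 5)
The statement you are addressing is explicitly labelled a \emph{conjecture} in the paper: the authors offer no proof, only simulation evidence, and they pose the sharper $O(n/d^{\ast})$ question as open. So there is nothing in the paper to compare your argument against, and the bar for your proposal is to actually close the problem. It does not: it is a research plan whose load-bearing step is both unproven and false as stated. The claim that if the neighbors' estimators $x_{w_1}(t),\ldots,x_{w_k}(t)$ are distinct then a linear fraction of them lies outside $\mathcal{U}_u(t)$ cannot be right --- distinctness of $k$ points in the affine space $\mathcal{W}^1$ is compatible with all of them lying in a two-dimensional linear subspace, and a fortiori with all of them lying inside $\mathcal{U}_u(t)$ (for instance once $\dim\mathcal{U}_u(t)=n$, or whenever the neighbors have already converged among themselves). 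Distinctness buys you at most an additive $O(1)$ in dimension, not $\Omega(\deg(u))$, so the claimed per-round increment $\Phi(t+1)-\Phi(t)=\Omega(n)$ does not follow. You correctly identify this redundancy issue as ``the main obstacle,'' but the proposed remedy --- a localized version of the $2d$-stagnation lemma forcing frontier agents to learn ``in numbers proportional to the boundary size'' --- is not proved, and even if it were, it would not suffice: on a path the boundary of a stagnant region is a single vertex, so the potential would increase by only $O(1)$ per round, reproducing the paper's $O(n^2)$ (equivalently $2nd$) bound rather than improving it to $O(n)$.

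The parts of your plan that are sound are exactly the parts the paper already contains: the potential $\Phi(t)=\sum_v\dim\mathcal{U}_v(t)\le n^2$ with at least one unit of growth per non-converged round is the paper's $n^2$ argument, and the stagnation-propagates-outward idea is the paper's $2d$-lemma. Everything beyond that (the $\Omega(\deg)$ growth lemma, the vertex-transitive and expander cases, the suggested refinement to $O(n/\bar d)$) is stated as something ``one would try to show.'' As it stands, the conjecture remains open, and your write-up should be presented as a proof strategy with identified obstructions, not as a proof.
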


\begin{open}
Does the process converge in $O(N/d^{\ast})$ iterations for all graphs, where $d^{\ast}$ is the minimal degree of the graph?
\end{open}

\section{Acknowledgments}
We thank Shachar Kariv for introducing us to the literature on learning on networks and for fascinating discussions.
Thanks go to Marcus M\"obius for an interesting discussion and for directing us
to the work of DeMarzo et al., after a draft of this note had been
circulated.
We also thank Grant Schoenebeck, Rafael M.\ Frongillo and Adam Kalai for
interesting discussions regarding follow-up work. We thank Ashwin Ganesan for a lesson on distance-transitive graphs. Finally, we are indebted
to Yaron Singer for much support and helpful suggestions on writing the results.

\bibliographystyle{IEEEtran} \bibliography{IEEEabrv,all}
\end{document}